\title{SLIP-SEC: Formalizing Secure Protocols for Model IP Protection}
\author[1]{Racchit Jain}
\author[1]{Satya Lokam}
\author[1]{Yehonathan Refael\thanks{Author worked on this research during his internship at Microsoft.}}
\author[1]{Adam Hakim}
\author[1]{Lev Greenberg}
\author[1]{Jay Tenenbaum}
\affil[1]{Microsoft}
\newtheorem{theorem}{Theorem}
\newtheorem{lem}[theorem]{Lemma}
\newtheorem{definition}{Definition}
\def\int{\displaystyle\mathop {\mbox{\rm int}}}    % interior
\newcommand{\charlie}{\ensuremath{\mathcal{C}} }
\newcommand{\david}{\ensuremath{\mathcal{D}} }
\newcommand{\rjc}[1]{}
\newcommand{\jtc}[1]{}
\begin{document}

\maketitle

\footnotetext[1]{Email addresses: Racchit Jain (\texttt{t-racjain@microsoft.com}), Satya Lokam (\texttt{satya.lokam@microsoft.com}), Yehonathan Refael (\texttt{t-yrefael@microsoft.com}), Adam Hakim (\texttt{adamhakim@microsoft.com}), Lev Greenberg (\texttt{levgreenberg@microsoft.com}), Jay Tenenbaum (\texttt{talaviv@microsoft.com}).}

\begin{abstract}
\jtc{wrote}
Large Language Models (LLMs) represent valuable intellectual property (IP), reflecting significant investments in training data, compute, and expertise. Deploying these models on partially trusted or insecure devices introduces substantial risk of model theft, making it essential to design inference protocols with provable security guarantees.

We present the formal framework and security foundations of SLIP, a hybrid inference protocol that splits model computation between a trusted and an untrusted resource. We define and analyze the key notions of model decomposition and hybrid inference protocols, and introduce formal properties including safety, correctness, efficiency, and t-soundness. We construct secure inference protocols based on additive decompositions of weight matrices, combined with masking and probabilistic verification techniques. We prove that these protocols achieve information-theoretic security against honest-but-curious adversaries, and provide robustness against malicious adversaries with negligible soundness error.

This paper focuses on the theoretical underpinnings of SLIP: precise definitions, formal protocols, and proofs of security. Empirical validation and decomposition heuristics appear in the companion SLIP~\cite{refael2024slipsecuringllmsip} paper. Together, the two works provide a complete account of securing LLM IP via hybrid inference, bridging both practice and theory.
\end{abstract}

\allowdisplaybreaks

%%%%%%%%%% Section 1 %%%%%%%%%%%%
%%%%%%%%%%%%%%%%%%%%%%%%%%%%%%%%%
\section{Introduction}
\label{sec:intro} 
\jtc{wrote}
The rapid adoption of Large Language Models (LLMs) has raised new concerns around intellectual property (IP) protection \cite{zhao2023survey}. Training such models requires immense resources, and the resulting parameters constitute valuable proprietary assets \cite{perrault2024artificial}. While cloud deployment offers strong protections, economic pressures increasingly drive model owners to offload computation to less secure environments such as edge devices, where the risk of IP theft is acute \cite{businessinsiderChatGPTCould}.

A natural solution is to split computation between a trusted entity (e.g., a secure cloud server) and an untrusted but efficient entity (e.g., an edge device), a process we refer to as \textit{model decomposition}, and perform inference via what we call a \textit{hybrid inference protocol} \cite{8763885}. The challenge is to design model decompositions and hybrid inference protocols that are both efficient—delegating most computation to the untrusted entity—and secure—ensuring that in every inference the untrusted entity learns nothing more about the sensitive model parameters beyond learning the model output on the given input \cite{petitcolas2023kerckhoffs}.

In our companion work~\cite{refael2024slipsecuringllmsip}, we introduced SLIP, a practical hybrid inference method for LLMs, and demonstrated empirically that it preserves accuracy and latency while significantly mitigating model theft. That paper focused on the machine learning aspects of SLIP: decomposition strategies via Singular Value Decomposition (SVD) \cite{meng2024pissa,wang2024svd,sharma2023truth}, and empirical evaluation.

In this paper, we develop the formal security foundations of SLIP. Specifically, we
\begin{enumerate}
    \item Define \textbf{model decomposition} formally and introduce the notion of \textit{safety}, which ensures that the untrusted party’s partial model does not enable IP extraction beyond standard black-box attacks \cite{birch2023model,shamir2023polynomial};

    \item Formalize \textbf{hybrid inference protocols}, and define properties including \textit{correctness, efficiency, security,} and \textit{t-integrity} against malicious adversaries \cite{cryptoeprint:2023/1147};

    \item Construct secure protocols for both the honest-but-curious and malicious settings, proving information-theoretic security in the former case and negligible soundness error in the latter \cite{abadi2016deep,knott2021crypten}.
\end{enumerate}

Our results establish SLIP as the first hybrid inference protocol for LLMs with rigorous and provable security guarantees. By separating the contributions into two companion papers: this work for formal proofs, and SLIP~\cite{refael2024slipsecuringllmsip} for empirical validation, we aim to provide the security and cryptography community with both a solid theoretical foundation and evidence of practical feasibility.

%%%%%%%%%% Section 2 %%%%%%%%%%%%
%%%%%%%%%%%%%%%%%%%%%%%%%%%%%%%%%
\section{General Framework: Model Decomposition and Hybrid Inference Protocol}
\label{sec:framework_settings}

In this section, we describe a general framework to offload most of the computational workload associated with model inferences to an untrusted entity, while concurrently protecting the model's IP. The two main conceptual ingredients in our framework are model decomposition and the hybrid inference protocol, as shown in Figure \ref{fig:framework}. We first define these notions by considering a \emph{trusted} entity $\mathcal{C}$harlie and an \emph{untrusted} entity $\mathcal{D}$avid. After a model is decomposed and distributed between them, on any given input, Charlie and David execute a hybrid inference protocol to compute the inference output that the original model would have produced on that input. The \emph{Safety} property of the decomposition that we define in Section~\ref{sec:safe-decomp} guarantees that Charlie's part contains most of the IP of the original model whereas David's part has insignificant information from the original model. Additionally, it is imperative that the hybrid protocol be efficient in the sense that David does the bulk of the computation to run inference through the protocol and secure in the sense that it ensures that it prevents the sensitive information pertaining to the model held by Charlie from being leaked to David throughout the protocol's enactment. The required \emph{efficiency} and \emph{security} properties for hybrid inference protocols are defined in Section~\ref{sec:secure-hybrid-inference}. Such an efficient and secure protocol executed on a safe model decomposition then achieves our aim - protecting the IP of the model while offloading most of the computational workload to David.

We start by defining a model decomposition and a hybrid inference protocol. Then, the next two subsections define the desired properties of a model decomposition and a hybrid inference protocol. 

\medskip

\begin{definition}[Model Decomposition]
\label{def:model-decomposition}
Let $\phi_{\Theta}$ be a model with parameters $\Theta$. A \emph{model decomposition} of $\phi_{\Theta}$ is defined by an efficient\footnote{Here, as is standard in cryptography, we can define efficient running time to be probabilistic polynomial time in input size, e.g., number of parameters. But in machine learning, we often rely on more empirical notions of efficiency. For instance, when considering model extraction attacks, we may consider an attacker to be efficient if it uses significantly less resources, e.g., FLOPS, than a brute force re-training of the model from scratch.}  decomposition algorithm $\mathsf{decompose} \: : \: \phi_{\Theta} \: \mapsto \: (\phi_{\Theta_1}, \phi_{\Theta_2})$  that takes as input $\phi_{\Theta}$ and  outputs  a pair of (partial) models $\phi_{\Theta_1}$ and $\phi_{\Theta_2}$. 

%\begin{align}
%\label{eq:decompose-function}
%$\mathsf{decompose} \: : \: \phi_{\Theta} \: \mapsto \: (\phi^{1}_{\Theta_1}, \phi^{2}_{\Theta_2}).$
%\end{align}  
We often omit the model descriptors $\phi, \phi_{\Theta_1}, \phi_{\Theta_2}$ when they are clear from specifications of their respective parameters sets $\Theta, \Theta_1, \Theta_2$ and simply denote the decomposition as $\Theta = (\Theta_1, \Theta_2)$. 
\end{definition}

Note that $\phi_{\Theta_i}$, $i=1,2$, need not be fully functional models by themselves, e.g., in the case of a neural network comprised of fully connected layers, i.e., Multi Layer Perceptrons (MLP's), they could be disconnected directed weighted graphs with nodes defining neural operations. 

\medskip

\begin{definition}[Hybrid Inference Protocol]
\label{def:hybrid-inf}
Given a model decomposition $\Theta = (\Theta_{\charlie}, \Theta_{\david})$, where $\Theta_{\charlie}$ is given to Charlie and $\Theta_{\david}$ is given to David, 
a Hybrid Inference protocol $\Pi := \Pi(\Theta_{\charlie}, \Theta_{\david})$ is a communication protocol between Charlie and David that proceeds as follows:

\begin{itemize}
    \item \textbf{Initialization:} \emph{the input $x$ for inference is given to both parties. 
    Charlie’s local state is initialized to include $\Theta_{\charlie}$; 
    similarly David’s local state is initialized to include $\Theta_{\david}$. 
    Protocol $\Pi$ may also specify an initialization or \emph{set up} procedure that the parties must run 
    to populate their initial states and compute parameters needed to compute and interpret messages in the protocol.}

    \item \textbf{Communication:} \emph{Charlie and David communicate by turns, where each party computes its next message 
    and updates its local state as a (local) function of the history of their communication so far and its local state. 
    The protocol $\Pi$ specifies these local functions. We also assume a perfect communication channel between them, 
    i.e., messages are always delivered and never tampered.}

    \item \textbf{Output:} \emph{At the end of the protocol, both parties output either a value $\Pi(x)$ or Charlie outputs the abort symbol $\bot$.}
\end{itemize}
\end{definition}

\begin{figure}[h]
    \centering
    \includegraphics{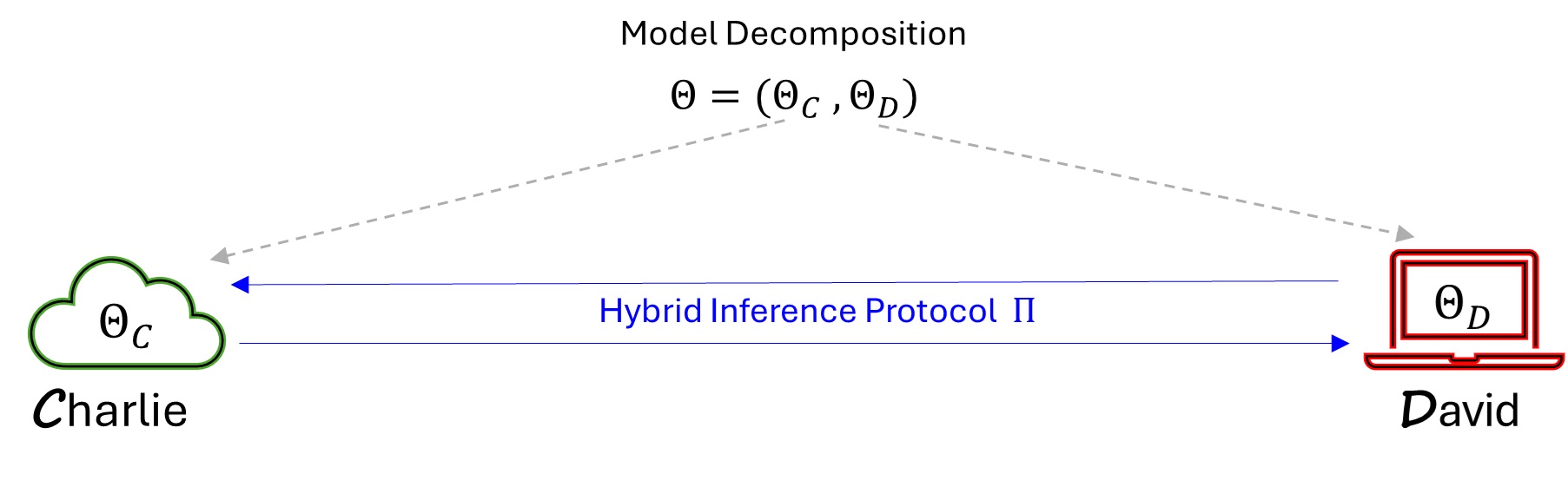}
    \caption{The proposed framework, consisting of the Model Decomposition and the Hybrid Inference Protocol}
    \label{fig:framework} % Add a label for referencing the figure
\end{figure}

\subsection{Properties of Model Decomposition}
\label{sec:safe-decomp}

In this section, we define properties of a model decomposition that formalize its ability to protect the IP of the model while offloading most of the computational cost of model inference to an untrusted entity. To formalize the notion of model IP, we use the risk of a model.

For a task data distribution $P(x,y),$ a model $\mathcal{M}$ and a loss function $\ell$, the model true risk is given by $\mathsf{risk}(\mathcal{M}) := \mathbb{E}_P \left[\ell(y, \mathcal{M}(x))\right].$ 
Let $A$ be an algorithm that, given the partial model $\Theta_{\david}$ tries to reconstruct an approximation $\hat{\Theta}$ to $\Theta$. The intuition is that the algorithm $A$ is an attacker trying to "reverse engineer" $\Theta$ given $\Theta_{\david}$ model, i.e., steal the IP using information given to David. 

\medskip
\begin{definition}[Safety]
\label{def:safety} 
A model $\mathcal{M}_1$ is an \emph{$\epsilon$-approximation} to a model $\mathcal{M}_2$, if $\mathsf{risk}(\mathcal{M}_1) \leq (1+ \epsilon) \cdot \mathsf{risk}(\mathcal{M}_2)$, $0 \leq \epsilon \leq 1$; the smaller the $\epsilon$, the better the approximation.
A decomposition $\Theta = (\Theta_\charlie, \Theta_\david)$ is said to be \emph{safe} if,
for every adversary ${\david}_1$, 
given $\Theta_{\david}$ and black box access to $\Theta$,
recovers $\epsilon_1$-approximation $\Theta_1$  to $\Theta$ in probabilistic time, polynomial in number of parameters $|\Theta|$ of $\Theta$,  
there is a probabilistic polynomial time adversary ${\david}_2$ that can construct $\epsilon_2$ approximation $\Theta_2$ to $\Theta$ using \emph{only} black box queries to $\Theta$ (i.e., without using $\Theta_{\david}$) for some $\epsilon_2 \approx \epsilon_1$. See Figure~\ref{fig:defn-safety} for an illustration. 
\end{definition}

\begin{figure}[htb]
    \centering
    \includegraphics[width=0.8\linewidth]
    {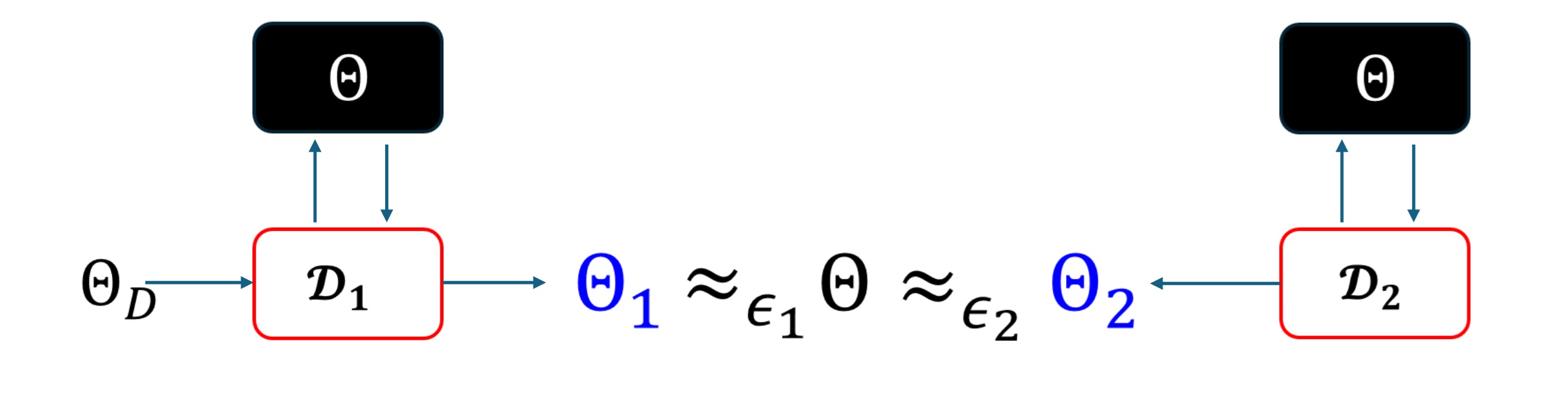}
    \caption{Safety of a Model Decomposition.}
    \label{fig:defn-safety}
\end{figure}

In more detail, ${\david}_1$ is an algorithm that computes the $\epsilon_1$ approximation $\Theta_1$ to $\Theta$. ${\david}_1$ has black box (oracle) access to $\Theta$ and access to $\Theta_{\david}$. ${\david}_2$ is an algorithm that tries to construct an $\epsilon_2$ approximation $\Theta_2$ to $\Theta$ having only access to black box $\Theta$. Note that such a $\david_2$ is meaningful in practice, for instance the black box attack by Shamir et al.\cite{canales2024polynomial} shows that such a $\david_2$ is feasible for certain models $\phi_\Theta$.What the definition says is that essentially such black box attacks are the only means $\david_1$ has to ``reverse engineer'' $\Theta$ even after being given $\Theta_{\david}$; in other words, any extra information (beyond black box access to $\Theta$) that $\david_1$ gets from  $\Theta_D$ is essentially useless. 

Following Definition~\ref{def:model-decomposition}, a model $\phi_{\Theta}$ with parameters $\Theta$ 
is decomposed into two parts $\Theta = (\Theta_{\charlie}, \Theta_{\david})$ using an efficient algorithm 
$\mathsf{decompose}$. In our framework, we focus on \emph{additive decompositions} of the weight matrices 
in $\Theta$. For any layer $i$ with weight matrix $W_i \in \mathbb{Z}_p^{d_{i+1} \times d_{i}}$, we instantiate this 
decomposition as $W_i = W_i^{\charlie} + W_i^{\david}$, where $W_i^{\charlie}$ corresponds to the part of $\Theta_{\charlie}$ retained locally by Charlie and $W_i^{\david}$ corresponds to the part of $\Theta_{\david}$ sent to David.

Although any additive decomposition is theoretically allowed, Refael et al.~\cite{refael2024slipsecuringllmsip} find that the 
\emph{Singular Value Decomposition (SVD)} provides a principled and effective heuristic. 
For a weight matrix $W$, the SVD decomposition is $W = U \Sigma V^\top$,
where $U \in \mathbb{Z_p}^{d_{i+1} \times r}$ and $V \in \mathbb{Z_p}^{d_i \times r}$ are orthonormal matrices and $\Sigma = \mathrm{diag}(\sigma_1, \sigma_2, \dots, \sigma_r)$ contains singular values sorted in decreasing order: $\sigma_1 \geq \sigma_2 \geq \dots \geq \sigma_r$.

We define an SVD-based additive split by retaining the top-$k$ singular components in $W^{\charlie}$ and assigning 
the rest to $W^{\david}$:
\[
    W^{\charlie} = U_k \Sigma_k V_k^\top,
    \qquad
    W^{\david} = W - W^{\charlie}.
\]

The concept of a safe decomposition is the foundation to protect the IP of a model in our framework. However, it is not sufficient. In order to offload computation to an untrusted entity (David) such a decomposition must be nontrivial and asymmetrically split the computation for inference between Charlie and David. They must communicate using an interactive protocol. Such a protocol is called a hybrid inference protocol and its desired properties are formalized next. 

\subsection{Properties of Hybrid Inference Protocol}
\label{sec:secure-hybrid-inference}
Recall from Definition~\ref{def:hybrid-inf} that at the beginning of the hybrid inference protocol, $\Pi(\Theta_{\charlie}, \Theta_{\david})$, $\charlie$ and $\david$ both receive the input $x$,  and at the end of executing $\Pi$, they should both obtain the output of the protocol $\Pi(x)$, which may be equal $\Theta(x)$ or $\bot$. Intuitively, we want a protocol where the computation done by Charlie is much smaller than if Charlie were to run the inference on the entire model by itself. We also want the protocol to output $\Pi(x) = \Theta(x)$ when both parties behave honestly. Since we offload $\Theta_{\charlie}$ to David and it performs most of the computation, it then becomes necessary that David's participation in the protocol should not leak any extra information about Charlie's part of the model than what can be extracted through black box queries to $\Theta$. Basically while participating in the protocol David should not learn anything beyond the input to the model, the output of the model and the part of the model that is owned by David ($\Theta_{\david}$). 

To formalize this notion of "not learning anything beyond the input and output" for David, we define two \emph{views} of David. Let $\mathsf{View_{\Pi}^{\mathsf{real}}}(x)$ be David's \emph{real world} view while participating in the protocol $\Pi$ with Charlie on some input $x$. David's real world view consists of its local state and messages received by Charlie during the execution of the protocol $\Pi$. Let $\mathsf{View_{Sim}^{\mathsf{ideal}}}(x)$ be David's \emph{ideal world} view on some input $x$. We define a simulator $\mathsf{Sim}(x, \Theta(x), \Theta_{\david})$ that is only given the input to the model $x$, the output of the model $\Theta(x)$ and David's part of the model $\Theta_{\david}$. In the ideal world David knows nothing beyond the model input, output and the split of the model in its local state. The Simulator's job is to construct an ideal world view for David that is indistinguishable from his real world view. If such a simulator exists, that means that whatever David can compute in the real protocol he could have computed it just from $x, \Theta(x)$. Therefore the existence of this simulator implies that David learns nothing beyond the input and output of the model while participating in $\Pi$.

In the case where David does not follow the protocol honestly and acts as a malicious adversary, the protocol should be designed in a way such that Charlie doesn't accept an incorrect output except with negligible probability. These properties are captured in the following definitions. 

\medskip

\begin{definition}[Efficiency]
\label{defn:eff-decomp} 
For a hybrid inference protocol $\Pi$ (Definition~\ref{def:hybrid-inf}), let $T_{\Pi, \charlie}(\Theta_{\charlie}, \Theta_{\david})$ be the time complexity of \emph{local} computation by \charlie during his participation in $\Pi$. Let $T_{\charlie}(\Theta)$ be the time complexity of Charlie for running inference on $\phi_\Theta$ all by himself. A hybrid inference protocol $\Pi(\Theta_\charlie, \Theta_\david)$ is said to be \emph{efficient}, if, for a sufficiently small $\epsilon$ (say, $\epsilon = 0.1$),  
\begin{align}
\label{eq:eff-decomp}
T_{\Pi,\charlie}(\Theta_{\charlie}, \Theta_{\david}) & \; \leq  \; \epsilon \cdot T_{\charlie}(\Theta).
\end{align}

\end{definition} 

\begin{definition}[Security]
\label{defn:Security-decomp} 
A inference protocol $\Pi := \Pi(\Theta_{\charlie}, \Theta_{\david})$ between $\charlie$harlie and $\david$avid on a decomposition $ \Theta = ({\Theta_\charlie}, {\Theta_\david})$ is said to be \emph{secure} if, for every valid inference input $x$, there exists a probabilistic polynomial time simulator $\mathsf{Sim}(x, \Theta(x), \Theta_{\david})$ such that $$\mathsf{View_{\Pi}^{\mathsf{real}}}(x, \Theta(x)) \approx_c \mathsf{View_{Sim}^{\mathsf{ideal}}} (x,\Theta(x))$$

(see figure~\ref{fig:defn-security}).\jtc{I see this is the same definition. How does this work with the proof?}
\end{definition}

\begin{definition}[Correctness]
\label{defn:correct} 
A hybrid inference protocol $\Pi := \Pi(\Theta_{\charlie}, \Theta_{\david})$ between $\charlie$harlie and $\david$avid on a decomposition $ \Theta = ({\Theta_\charlie}, {\Theta_\david})$ is said to be correct if for every valid inference input $x$, when both parties execute the protocol honestly, i.e, there is no abort, $\Pi(x) = \Theta(x)$.
\end{definition} 

\begin{definition}[$t$-Integrity]
\label{defn:sound}
A hybrid inference protocol $\Pi := \Pi(\Theta_{\charlie}, \Theta_{\david})$ 
between Charlie and David on a decomposition 
$\Theta = (\Theta_{\charlie}, \Theta_{\david})$ achieves \emph{$t$-integrity}
against a malicious David if for every probabilistic polynomial-time adversary acting as David and for every input $x$, 
\[
    \Pr\!\left[\, \Pi(x) \neq \Theta(x) \mid  \Pi(x) \neq \bot \,\right] \le t,
\]
\jtc{I think you mean
\[
    \Pr\!\left[ \Pi(x) = \bot \mid \textup{David doesn't follow protocol}\,\right] \geq 1-t,
\]
I believe you want to say that if david is dishonest, then you abort whp. Also, what is the probability over? The random vectors of freivalds?
}
\rjc{I think the definition is correct, its saying that the probability that david cheats and the protocol doesn't output abort should be less than t. I've written it this way to represent that whenever david cheats, abort will be output with high probability. The probability is taken over the internal randomness of the protocol and the adversary like it says in the definition, I dont mention freivalds' because it has not been introduced yet, I've kept the definition general}
\jtc{Soundness in logic usually means that you can't prove something that's not true. Here we use the analogy to that a non-honest david can't convince Charlie. Hence, the assumption here is that david is not honest, and we want to show that for such a david, charlie catches him with high probability. I view this as analogous to the metric Recall in binary classification, which measures among the positives, how many you catch. It's always P[caught | positive]. Also, look at the point of view of the reader. He wants a guarantee that in most cases that david cheated it's caught. Finally, look at the proof. You bound the probability of something bad happening, and there is no bound on the probability of either something bad or not breaking early. The proof matches this intuition that you bound the probability of accepting, given that there was an attack somewhere. Also in page 9 you explicitly say: if David doesn’t participate in the protocol honestly, it will be caught with high probability and the protocol will output abort.}
where the probability is taken over the internal randomness of the protocol and the adversary.
\end{definition}

\begin{figure}[htb]
    \centering
    \includegraphics[width=0.8\linewidth]
    {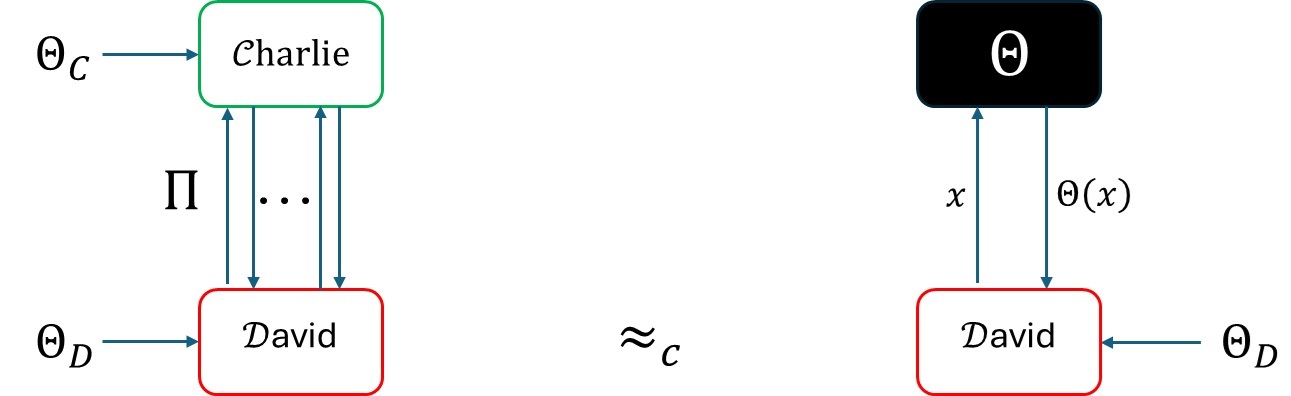}
    \caption{Security of a Hybird Inference Protocol $\Pi$.}
    \label{fig:defn-security}
\end{figure}

%%%%%%%%%% Section 3 %%%%%%%%%%%%
%%%%%%%%%%%%%%%%%%%%%%%%%%%%%%%%%

%%%%%%%%%% Section 4 %%%%%%%%%%%%
%%%%%%%%%%%%%%%%%%%%%%%%%%%%%%%%%

\section{Secure Hybrid Inference Protocol}
\label{sec:protocol}
 Before discussing the protocol, let us note that a theoretically obvious approach to offload inference to an untrusted party while protecting the model is to use \emph{obfuscation} that is, hiding the model weights cryptographically while preserving model functionality. However, despite astonishing breakthroughs in recent years \cite{CACM-iO},
 current solutions for cryptographically secure obfuscation are extremely impractical even for simple functions. For instance, obfuscating the simple AND function on 32 bits is estimated to take 23.5 minutes with the obfuscated program taking 11.6 GB to store and 77 seconds to evaluate ~\cite{CGM+-Obfus}. We thus need more efficient alternatives that can protect functionalities represented by modern DNN models. 
 
 Our approach here addresses this challenge by decomposing the model between a trusted entity Charlie and an untrusted computing entity David and designing a protocol between them for inference. This protocol will be secure and efficient (cf. Definitions \ref{defn:eff-decomp} and \ref{defn:Security-decomp}).

\jtc{Is this the best place to mention the limitations of our protocol?}We demonstrate the protocol on MLPs, however it is applicable on all types of layers that have a linear operation on the input and a non linear operation followed by it.\jtc{In the ML paper I made it more specific - Matrix-vector multiplications in general.}Recall that the decomposition $\Theta = (\Theta_{\charlie}, \Theta_{\david})$ splits weight matrices of the model's layers such that it satisfies the \emph{safe decomposition} property.

\jtc{I would place this closer to where you use it. The reader has no context here yet.}We assume that all arithmetic in the protocol is done over a finite field $\mathbb{F} = \mathbb{Z}_p$ for a large prime $p$. That is, all operations are done modulo a prime $p$ that is large enough such that the they avoid wrap-arounds so as to recover the final output without loss of precision. 

\subsection{$\Pi_{\mathsf{honest}}$: Secure Inference Against Honest But Curious David}
\label{sec:single-MLP}
We now present a protocol to securely compute inference on an $L$ layer MLP under the assumption that David, the untrusted party is an \emph{honest but curious} adversary. For illustration purposes, we will describe the protocol for a single layer $i$. However, the protocol proceeds for all the layers in a similar manner and the arguments are composable. Let $W_i \in \mathbb{Z}_p^{d_{i+1} \times d_i}$ be the weight matrix of the $i$-th layer and $a_i \in \mathbb{Z}_p^{d_{i+1}}$ be the output of the activation function ($\sigma$) after $W_i$ is applied to its input $(a_{i-1} \in \mathbb{Z}_p^{d_i})$, the output of the previous layer (linear + activation). In other words, 
\begin{align}
\label{eq:basic-layer}
a_i & = \sigma (W_i a_{i-1}),
\end{align} 
for $i \geq 1$, where we take $a_0 = x$, the input to the model $\Theta$. For simplicity, we assume that $x$ is available initially to both $\charlie$ and $\david$ and that at the end of the inference, the output $\Theta(x)$ should also available to both parties. Let us first look at what a naive insecure inference protocol might look like.  

\textbf{Insecure inference protocol.} For a layer $i$, the decomposition $\Theta = (\Theta_{\charlie}, \Theta_{\david})$ splits $W_i = W_i^{\charlie} + W_i^{\david}$. Thus an \emph{insecure} protocol, where Charlie's part of the model $\Theta_{\charlie}$ is not being protected, is: 
\begin{enumerate}
\item[(i)] Charlie $\charlie$ sends $a_{i-1}$ to David $\david$ and locally computes $a_i^{\charlie} := W_i^{\charlie} a_{i-1}$,  
\item[(ii)] $\david$ computes $a_i^{\david} := W_i^{\david} a_{i-1}$ and sends $a_i^{\david}$ back to Charlie.
\item[(iii)] Charlie computes the output of the layer $a_i$ according to equation~\ref{eq:insecure}.
\begin{align}
\label{eq:insecure}
a_i & = \sigma (a_i^{\charlie} + a_i^{\david}) = \sigma ((W_i^{\charlie} + W_i^{\david}) a_{i-1}) = \sigma (W_i a_{i-1}). 
\end{align}
\item[(iv)] Now, Charlie and David are ready for the execution of the $(i+1)$-st layer (linear + activation) on its input $a_i$. 
\end{enumerate}

%%%%%%%%%%%%%%%%%%%%% Insecure Protocol %%%%%%%%%%%%%%%%%%%%%%%%%%
%%%%%%%%%%%%%%%%%%%%% moved to backup  %%%%%%%%%%%%%%%%%%%%%%%%%
%%%%%%%%%%%%%%%%%%%%% Insecure Protocol END %%%%%%%%%%%%%%%%%%%%%%%%%%

What is wrong with the insecure protocol? From the point of view of model protection, in the worst case scenario (l), the activation function (such as ReLU) has no impact (or is invertible, such as with \texttt{softmax}) on the right hand side of \eqref{eq:basic-layer}. Therefore, we can generate a linear equation for the unknowns $W_i$ where $a_{i-1}$ and $a_i$ are known to the adversary $\david$. Thus, making sufficiently many inference queries, $\david$ can set up linear equations $a_i^t = W_i a_{i-1}^{t}$ for sufficiently many $t$ to solve for $W_i$. Thus the trivial protocol above does \emph{not} protect the model. 

\textbf{Secure Inference Protocol $\Pi_{\mathsf{honest}}$.} We now describe a secure hybrid inference protocol $\Pi_{\mathsf{honest}}$  that can be proven to prevent David from stealing the model $\Theta$ based on $\Pi_{\mathsf{honest}}$, \emph{assuming} that $\Theta = (\Theta_C, \Theta_D)$ is a \emph{safe decomposition} according to Definition~\ref{def:safety}. The basic concept is simple, Charlie masks $a_{i-1}$ with a \emph{random one time pad} $r_{i-1}$ sampled uniformly at random from $\mathbb{Z}_p^{d_i}$ and sends $\widetilde{a_{i-1}} = a_{i-1} + r_{i-1}$ to David instead of just $a_{i-1}$ in Step (i) of the insecure protocol above. David then computes as before in Step (ii) and sends an ``incorrect'' $\widetilde{a_i^{\david}}$ to Charlie. This is then cleaned up by Charlie using a \emph{pre-computed} cancellation mask. This allows Charlie to compute $a_i$ as before in Step (iii) and to continue onwards. Note that for $i = 1$, $a_{i-1} = a_0 = x$. Since input is available in plaintext to both Charlie and David we don't need to mask $a_0$ and can send it to David in the clear and hence, don't need to compute cancellation masks or sample random vectors for $i=1$. For $i=1$, the insecure protocol suffices. The secure protocol is described below in figures ~\ref{fig:secure-inference-offline-honest-but-curious} and ~\ref{fig:secure-inference-online-honest-but-curious}. 

\begin{figure}[H]
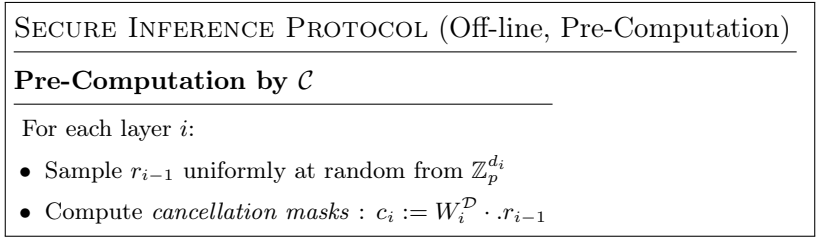

    \centering
     \scalebox{1}{
        \fbox{%
            \procedure{\textsc{Secure Inference Protocol} (Off-line, Pre-Computation) }{%   
                {%
                    \begin{subprocedure}\procedure{\textbf{Pre-Computation by $\charlie$}}{%
                        \text{ For each layer $i$: } \\
                        \bullet \text{   Sample }  r_{i-1} \text{ uniformly at random from } \mathbb{Z}_p^{d_{i}} \\
                        \bullet \text{   Compute \emph{cancellation masks} : } c_i := W_i^{\david} \cdot. r_{i-1}
                    }
                    \end{subprocedure}
                } 
            }
            }
            }
    \caption{Secure Inference Protocol -- Pre-Computation}
    \label{fig:secure-inference-offline-honest-but-curious}             
\end{figure}

%%%%%%%%%%%%%%%%%%%%% Secure Protocol %%%%%%%%%%%%%%%%%%%%%%%%%%

\begin{figure}[H]
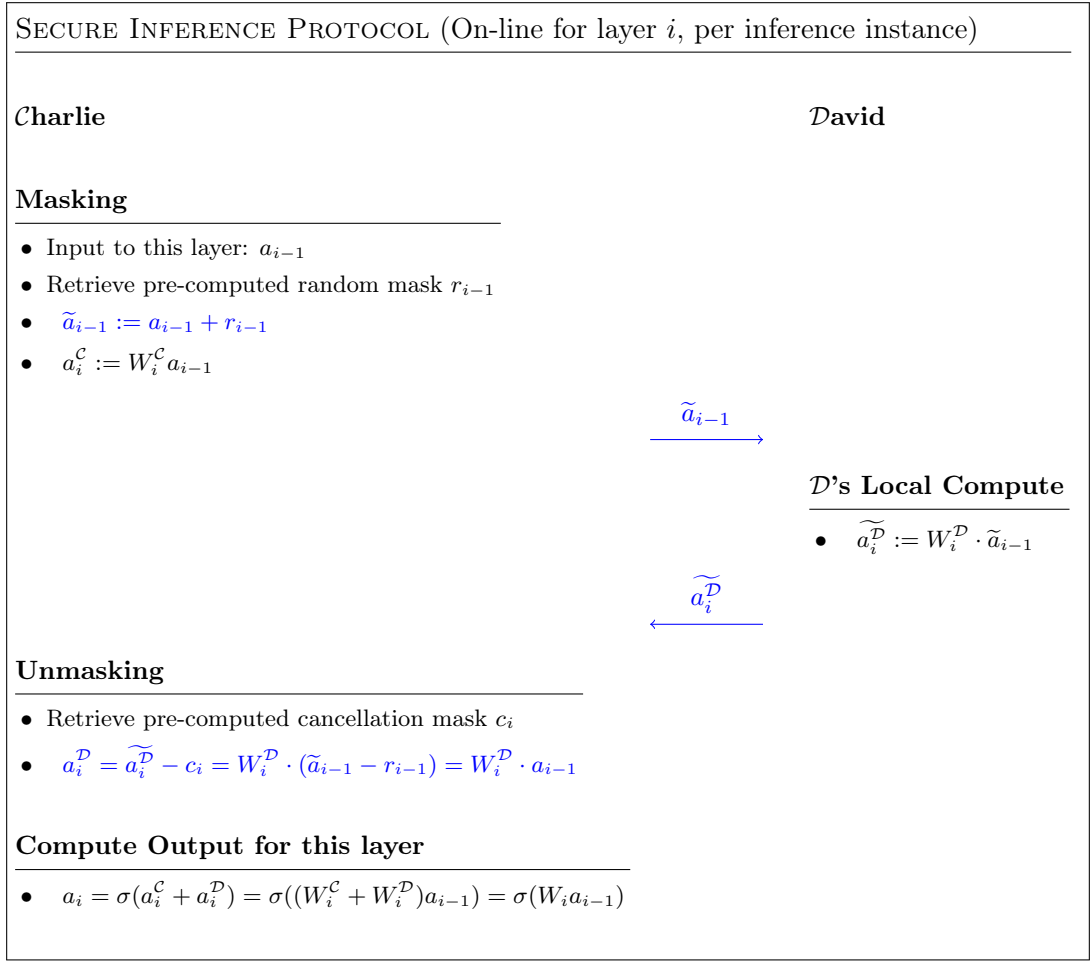

    \centering
     \scalebox{1}{
        \fbox{%
            \procedure{\textsc{Secure Inference Protocol} (On-line for layer $i$,  per inference instance)}{%
                \< \< \\
                \textbf{$\charlie$harlie} \<   \<  \textbf{$\david$avid} \\
                \< \< \\
                {%
                    \begin{subprocedure}\procedure{\textbf{Masking}}{%
                     \bullet \text{  Input to this layer: } a_{i-1} \\
                     \bullet \text{  Retrieve pre-computed random mask } r_{i-1} \\
                     \bullet \quad  \textcolor{blue}{\widetilde{a}_{i-1} := a_{i-1} + r_{i-1}} \\ 
                     \bullet \quad a_i^{\charlie} := W_i^{\charlie} a_{i-1} 
                    }
                    \end{subprocedure}
                } \<  \< \\
                  \< \textcolor{blue}{\sendmessageright*[1.5 cm]{\widetilde{a}_{i-1}}}  \<   \\
                  \<  \< 
                {%
                    \begin{subprocedure}\procedure{\textbf{$\david$'s Local Compute}}{%
                    \bullet \quad \widetilde{a_i^{\david}} := W_i^{\david} \cdot \widetilde{a}_{i-1}
                    }
                    \end{subprocedure}
                }\\
               \< \textcolor{blue}{\sendmessageleft*[1.5 cm]{\widetilde{a_i^{\david}}}} \<   \\
               {%
                    \begin{subprocedure}\procedure{\textbf{Unmasking}}{%
                    \bullet \text{ Retrieve pre-computed cancellation mask $c_i$ } \\
                    \bullet \quad \textcolor{blue}{a_i^{\david} = \widetilde{a_i^{\david}} - c_i = W_i^{\david} \cdot \left(\widetilde{a}_{i-1} -  r_{i-1} \right) = W_i^{\david} \cdot a_{i-1}}
                    }
                    \end{subprocedure}
                } \<   \<  \\
                \< \< \\
               {%
                    \begin{subprocedure}\procedure{\textbf{Compute Output for this layer}}{%
                    \bullet  \quad a_i = \sigma (a_i^{\charlie} + a_i^{\david}) = \sigma ((W_i^{\charlie} + W_i^{\david}) a_{i-1}) = \sigma (W_i a_{i-1}) 
                    }
                    \end{subprocedure}
                } \<   \<  \\
            } %procedure
         } %fbox
    }  %scalebox
    \caption{Secure Inference Protocol -- Online, per Inference Instance}
    \label{fig:secure-inference-online-honest-but-curious}
\end{figure} 
%%%%%%%%%%%%%%%%%%%%% Secure Protocol %%%%%%%%%%%%%%%%%%%%%%%%%%

\begin{theorem} 
\label{thm:pf-pPi-single}
The hybrid inference protocol $\Pi_{\mathsf{honest}}$ on decomposition $\Theta = (\Theta_{\charlie}, \Theta_{\david})$, where $\Theta$ is an $L$-layer MLP, is correct and secure against an honest but curious $\david$ for every input x.
\end{theorem}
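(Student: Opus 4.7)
The theorem has two parts, correctness and security. Since the protocol is layer-by-layer and the per-layer subprotocols are structurally identical (except for the first layer, where no masking is applied because $a_0 = x$ is public), I would prove each property for a single layer and then invoke composition across the $L$ layers.

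\textbf{Correctness.} I would argue by induction on $i \in \{1, \ldots, L\}$ that Charlie (and David, at the end of the protocol) correctly holds $a_i = \sigma(W_i a_{i-1})$. The base case $i=1$ reduces to the insecure protocol, which is correct by direct computation. For the inductive step, assume the correct $a_{i-1}$ is held. By linearity, the message David returns satisfies $\widetilde{a_i^{\david}} = W_i^{\david}(a_{i-1} + r_{i-1}) = W_i^{\david} a_{i-1} + c_i$, so after Charlie subtracts the pre-computed cancellation mask $c_i$ he obtains $a_i^{\david} = W_i^{\david} a_{i-1}$. Combined with his locally computed $a_i^{\charlie} = W_i^{\charlie} a_{i-1}$ and the additive decomposition $W_i = W_i^{\charlie} + W_i^{\david}$, this yields $a_i = \sigma(W_i a_{i-1})$, closing the induction. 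Applied at $i = L$ this gives $\Pi(x) = \Theta(x)$.

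\textbf{Security.} I would exhibit a simulator $\mathsf{Sim}(x, \Theta(x), \Theta_{\david})$ that produces an ideal-world view statistically (and hence computationally) identical to David's real-world view. David's real view consists of $x$, his partial model $\Theta_{\david}$, the final output $\Theta(x)$, the masked activations $\{\widetilde{a}_{i-1}\}_{i=2}^{L}$ received from Charlie, and his own deterministic outgoing messages $\{\widetilde{a_i^{\david}}\}_{i=1}^{L}$. The simulator copies $x, \Theta_{\david}, \Theta(x)$ from its inputs, samples $\widetilde{a}_{i-1}' \leftarrow \mathbb{Z}_p^{d_i}$ uniformly and independently for each $i \geq 2$, and reconstructs $\widetilde{a_i^{\david}}' := W_i^{\david} \widetilde{a}_{i-1}'$ exactly as David would (taking $\widetilde{a}_0 := x$). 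The indistinguishability follows from the one-time-pad property: each $r_{i-1}$ is drawn independently and uniformly from $\mathbb{Z}_p^{d_i}$ and used in a single additive mask, so $\widetilde{a}_{i-1} = a_{i-1} + r_{i-1}$ is uniform on $\mathbb{Z}_p^{d_i}$ for every fixed $a_{i-1}$, and the $\widetilde{a}_{i-1}$ across layers are jointly independent. The joint distribution of masked activations in the real view thus exactly matches the simulator's, and the remaining components are identical deterministic functions, giving a perfectly indistinguishable view.

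\textbf{Main obstacle.} There is no deep technical barrier, but the delicate step is composing the single-layer security argument across all $L$ layers. To do this cleanly I would need to argue that the masks $\{r_{i-1}\}_{i=2}^{L}$ are mutually independent and independent of all prior activations, so that even conditioned on the messages for earlier layers each current mask still acts as a fresh one-time pad. The pre-computation in Figure~\ref{fig:secure-inference-offline-honest-but-curious} samples each $r_{i-1}$ independently, which gives this property per inference instance; one should additionally note that fresh pre-computation is needed per inference, since reusing a one-time pad across two different values of $a_{i-1}$ would leak their difference to David. Once this point is settled, the inductive composition is routine.
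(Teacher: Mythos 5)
Your proposal is correct and follows essentially the same route as the paper's proof: a direct computation for correctness, and a simulator that replaces the masked activations with independent uniform vectors, justified by the one-time-pad property of the additive masks $r_{i-1}$. Your version is in fact more careful than the paper's (explicit induction for correctness, explicit joint-independence argument across layers, and the observation that masks must be freshly sampled per inference instance), all of which is consistent with the paper's remark that the security is information-theoretic.
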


\begin{proof}
    The protocol satisfies correctness since the output at a layer $i$ if David follows the protocol honestly is $\sigma(W_ia_{i-1})$, which would be $\Pi_{\mathsf{honest}}(x) = a_L = \sigma(W_La_{L-1}) = \Theta(x)$ at the output layer.

    In the real world David and Charlie participate in the protocol $\Pi_{\mathsf{honest}}$  and since the output of the protocol is shared with both parties at the end of the protocol, David has access to the model output as well as intermediate activations. David has the real world view: $$\mathsf{View}_{\Pi_{\mathsf{honest}}}^{\mathsf{real}}(x) = \{ \Theta_{\david}, \{\tilde{a}_i\}_{i=1}^{L}, \Pi_{\mathsf{honest}}\}$$

    The ideal world view is generated by the simulator $\mathsf{Sim}(x, \Theta(x), \Theta_{\david})$. To prove security we have to show the existence of a $\mathsf{Sim}$ that generates an ideal world view that is computationally indistinguishable from David's real world view. 

    Let us construct the simulator such that $\mathsf{Sim}(x,\Theta(x), \Theta_{\david}) = \{\Theta_{\david}, \{u_i\}_{i=1}^{L}, \Theta(x)\}$ where $u_i$ is a truly random vector of the same length as $\tilde{a}_{i}$ with entries sampled independently and uniformly at random from $\mathbb{Z}_p$ and $$\mathsf{View}_{\mathsf{Sim}}^{\mathsf{ideal}}(x) = \mathsf{Sim}(x,\Theta(x), \Theta_{\david}) = \{\Theta_{\david}, \{u_i\}_{i=1}^{L}, \Theta(x)\}$$

    % We model the ideal world by creating a black box oracle for $\Theta$. The adversary acts as David and communicates only with this black box $\Theta$ by sending the input $x$ and recieving $\Theta(x)$. David also has access to $\Theta_{\david}$, which is his part of the decomposition $\Theta = (\Theta_{\charlie}, \Theta_{\david})$. The view of the adversary given input $x$ in the ideal world is just $\Theta_{\david}$ and $\Theta(x)$. 
    
    % \jtc{IIUC, the simulator is such that the adversary cannot distinguish interacting with it and with the real world, and not as written here.} \rjc{The simulator "simulates" the real world in the ideal world, thereby showing that the ideal world and the real world are indistinguishable} We want to show that an ideal world adversary must be able to use this information to generate a view that is computationally indistinguishable from the real world view.  We go about proving this using an algorithm called the Simulator which we will denote by $\mathsf{Sim}$, since it generates a "simulation" of the real world view. We define the adversary's ideal world view through the simulator's output. The simulator outputs $\mathsf{Sim(x, \Theta_{\david}, \Theta(x))} = \{\Theta_{\david}, \{u_i\}_{i=1}^{L}, \Theta(x)\}$ where $u_i$ is a truly random vector of the same length as $\tilde{a}_{i}$ with entries sampled independently and uniformly at random from $\mathbb{Z}_p$ and $$\mathsf{View}_{\mathsf{Sim}}^{\mathsf{ideal}}(x) = \mathsf{Sim(x, \Theta_{\david}, \Theta(x))} = \{\Theta_{\david}, \{u_i\}_{i=1}^{L}, \Theta(x)\}$$

    Since the messages $\{\widetilde{a}_{i}\}_{i=1}^L$ from Charlie to David in the execution of $\Pi_{\mathsf{honest}}$ are indistinguishable from random vectors, this simulated ideal view of David is indistinguishable from his honest participation in $\Pi$. And when both parties participate honestly, we have shown that the protocol is correct hence $\Pi_{\mathsf{honest}}$ in the real world view is the same as $\Theta(x)$ in the ideal world view generated by $\mathsf{Sim}$. Hence we have shown that the real world view of David's  participation in the protocol is computationally indistinguishable from his ideal world view where he only sees the output of the model $\Theta$. 
    
\end{proof}

 Note that no cryptographic assumptions are used in the proof and hence we can call it ``information-theoretically secure.'' 

\subsection{$\Pi_{\mathsf{malicious}}$: Secure Inference of an MLP against Malicious David}
\label{sec:single-int-MLP}
The protocol described in ~\ref{fig:secure-inference-online} and ~\ref{fig:secure-inference-offline} is secure against honest but curious adversaries. We will now provide a protocol $\Pi_{\mathsf{malicious}}$ for an L layer MLP that is secure against malicious adversaries. This protocol like the previous protocol guarantees that no model IP is revealed to David through participation in the protocol under the assumption that $\Theta = (\Theta_{\charlie}, \Theta_{\david})$ is a \emph{safe decomposition} according to Definition~\ref{def:safety} along with the additional guarantee that if David doesn't participate in the protocol honestly, it will be caught with high probability and the protocol will output abort ($\bot$). 

The protocol operates similarly as seen in figures \ref{fig:secure-inference-online-honest-but-curious} and \ref{fig:secure-inference-offline-honest-but-curious} but with an integrity check. We do a batched integrity check where we sample $k$ uniformly random vectors from $\mathbb{Z}_{p}^{d_{i+1}}$ to build the matrix $Z_{i-1} \in \mathbb{Z}_{p}^{k \times d_{i+1}}$. We then compute our verification matrix $V_{i-1} = Z_{i-1}W_{i}^{\david}$. In the online phase Charlie verifies David's computation by simply doing $k$ inner products and checking if $Z_{i-1} \widetilde{a_{i}^{\david}}$ is equal to $V_{i-1}\tilde{a}_{i-1}$.

We note that for $i =1$, for which $a_0 = x$ which is the input to the model, we can optimize efficiency by not computing noise masks and cancellations masks similar to $\Pi_{\mathsf{honest}}$, since the input is available to both parties at the beginning of the protocol anyways. However we still need to perform an integrity check since David can maliciously output $\widetilde{a_1^{\david}}$. Therefore we need to sample $Z_0$ and compute $V_0$.

\begin{figure}[H]
    \centering
     \scalebox{1}{
        \fbox{%
            \procedure{\textsc{Secure Inference Protocol With Integrity Check} (Off-line, Pre-Computation) }{%   
                {%
                    \begin{subprocedure}\procedure{\textbf{Pre-Computation by $\charlie$}}{%
                        \text{ For a layer $i$: } \\
                        \bullet \text{   Sample }  r_{i-1} \text{ uniformly at random from } \mathbb{Z}_p^{d_{i}}\\
                        \bullet \text{   Compute \emph{cancellation masks} : } c_i := W_i^{\david} \cdot. r_{i-1}\\
                        \bullet \text{   Sample } Z_{i-1} \text{ uniformly at random from } \mathbb{Z}_p^{k \times d_{i+1}}\\
                        \bullet \text{  Compute \emph{verification vectors} : } V_{i-1} := Z_{i-1} \cdot W_i^{\david}
                    }
                    \end{subprocedure}
                } 
            }
            }
            }
    \caption{Secure Inference Protocol With Integrity Check -- Pre-Computation}
    \label{fig:secure-inference-offline}             
\end{figure}

%%%%%%%%%%%%%%%%%%%%% Secure Protocol %%%%%%%%%%%%%%%%%%%%%%%%%%

\begin{figure}[H]
    \centering
     \scalebox{1}{
        \fbox{%
            \procedure{\textsc{Secure Inference Protocol} (On-line,  per inference instance for layer $i$)}{%
                \< \< \\
                \textbf{$\charlie$harlie} \<   \<  \textbf{$\david$avid} \\
                \< \< \\
                {%
                    \begin{subprocedure}\procedure{\textbf{Masking}}{%
                     \bullet \text{  Input to this layer: } a_{i-1} \\
                     \bullet \text{  Retrieve pre-computed random mask } r_{i-1} \\
                     \bullet \quad  \textcolor{blue}{\widetilde{a}_{i-1} := a_{i-1} + r_{i-1}} \\ 
                     \bullet \quad a_i^{\charlie} := W_i^{\charlie} a_{i-1} \\
                    }
                    \end{subprocedure}
                } \<  \< \\
                  \< \textcolor{blue}{\sendmessageright*[1.5 cm]{\widetilde{a}_{i-1}}}  \<   \\
                  \<  \< 
                {%
                    \begin{subprocedure}\procedure{\textbf{$\david$'s Local Compute}}{%
                    \bullet \quad \widetilde{a_i^{\david}} := W_i^{\david} \cdot \widetilde{a}_{i-1}
                    }
                    \end{subprocedure}
                }\\
               \< \textcolor{blue}{\sendmessageleft*[1.5 cm]{\widetilde{a_i^{\david}}}} \<   \\
               {%
                    \begin{subprocedure}\procedure{\textbf{Unmasking}}{%
                    \bullet \text{ Retrieve pre-sampled random matrix  } Z_{i-1} \\
                    \bullet \text{      and pre-computed verification matrix }  V_{i-1} \\
                    \bullet \textbf{  Linear layer check: } Z_{i-1} \cdot \widetilde{a_i^{\david}} \stackrel{?}{=} V_{i-1} \cdot \widetilde{a}_{i-1} \\
                    \bullet \text{ If the check fails, Charlie rejects the} \\ \quad \text{ computation and aborts outputting $\bot$.} \\
                    \bullet \text{ Otherwise retrieve pre-computed cancellation mask $c_i$ } \\
                    \bullet \quad \textcolor{blue}{a_i^{\david} = \widetilde{a_i^{\david}} - c_i = W_i^{\david} \cdot \left(\widetilde{a}_{i-1} -  r_{i-1} \right) = W_i^{\david} \cdot a_{i-1}}
                    }
                    \end{subprocedure}
                } \<   \<  \\
                \< \< \\
               {%
                    \begin{subprocedure}\procedure{\textbf{Compute Output for this layer}}{%
                    \bullet  \quad a_i = \sigma (a_i^{\charlie} + a_i^{\david}) = \sigma ((W_i^{\charlie} + W_i^{\david}) a_{i-1}) = \sigma (W_i a_{i-1}) 
                    }
                    \end{subprocedure}
                } \<   \<  \\
            } %procedure
         } %fbox
    }  %scalebox
    \caption{Secure Inference Protocol with Integrity Check -- Online, per Inference Instance}
    \label{fig:secure-inference-online}
\end{figure} 

%%%%%%%%%%%%%%%%%%%%% Secure Protocol %%%%%%%%%%%%%%%%%%%%%%%%%%

\begin{lem}[Freivalds' Lemma]
\label{lem:freivalds-integrity}
Let $A$ be an $n \times n$ matrix and let $x, y$ be $n$-dimensional vectors over a field $\mathbb{F}$. Let $r$ be a uniformly random vector from $\mathbb{F}^n$. Suppose $y \neq Ax$. Then:
\[
\Pr[r^Ty = r^T(Ax)] = \frac{1}{|\mathbb{F}|}.
\]
\end{lem}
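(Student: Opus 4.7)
The plan is to reduce the statement to the elementary fact that, for any fixed nonzero vector $z \in \mathbb{F}^n$, a uniformly random $r \in \mathbb{F}^n$ satisfies $r^T z = 0$ with probability exactly $1/|\mathbb{F}|$. First, I would set $z := y - Ax$ and note that, by linearity of the inner product, the event $r^T y = r^T(Ax)$ is identical to the event $r^T z = 0$. By hypothesis $y \neq Ax$, so $z$ is a nonzero element of $\mathbb{F}^n$, and the problem reduces to bounding $\Pr[r^T z = 0]$.

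For that reduced claim I would use a one-line dimension count. The map $\phi : \mathbb{F}^n \to \mathbb{F}$ defined by $\phi(r) := r^T z$ is a linear functional, and it is nonzero because $z \neq 0$. Hence $\phi$ is surjective and its kernel is a codimension-one subspace of $\mathbb{F}^n$, of size exactly $|\mathbb{F}|^{n-1}$. Since $r$ is uniform on $\mathbb{F}^n$, which has $|\mathbb{F}|^n$ elements, we conclude
\[
\Pr[r^T z = 0] \;=\; \frac{|\mathbb{F}|^{n-1}}{|\mathbb{F}|^n} \;=\; \frac{1}{|\mathbb{F}|}.
\]

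If a more elementary presentation is preferred, I would instead argue by isolating a coordinate: pick any index $k$ with $z_k \neq 0$ (which exists since $z \neq 0$), and condition on the values of $r_i$ for $i \neq k$. The relation $r^T z = 0$ then rewrites as $z_k r_k = -\sum_{i \neq k} z_i r_i$, which, because $\mathbb{F}$ is a field and $z_k \neq 0$, has a unique solution $r_k = -z_k^{-1} \sum_{i \neq k} z_i r_i$. Marginalizing over the $|\mathbb{F}|^{n-1}$ choices of the remaining coordinates yields the same $1/|\mathbb{F}|$.

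There is no real obstacle in this proof: it is a standard application of the fact that a nonzero linear form over a field defines a hyperplane. The only subtle point worth flagging is the use of the field structure (invertibility of $z_k$), which is why the statement would fail over a general commutative ring but holds in the setting $\mathbb{F} = \mathbb{Z}_p$ that is assumed throughout the protocol.
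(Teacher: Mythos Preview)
Your proof is correct and self-contained. The reduction to the nonzero linear functional $r \mapsto r^T(y - Ax)$ and the hyperplane count are exactly the right tools, and the alternative coordinate-isolation argument is equally valid.

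The paper, by contrast, does not give an argument at all: it simply observes that the statement is the special case of Freivalds' original matrix-multiplication verification result \cite{Freivalds1977ProbabilisticMC} in which the second factor is an $n \times 1$ matrix, and defers to that reference. So your approach is genuinely different in that you actually prove the claim rather than cite it. The benefit of your route is that it is self-contained and makes explicit why the field hypothesis is needed (invertibility of $z_k$, or equivalently that a nonzero linear form has a codimension-one kernel); the paper's route is shorter on the page but pushes the reader to an external source for the substance.
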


\begin{proof}
    This lemma is a specific case of the Freivalds' lemma for verfifying matrix-matrix multiplications \cite{Freivalds1977ProbabilisticMC} where the second matrix is an $n \times 1$ matrix. 
\end{proof}

\begin{lem}[Batched Freivalds' Lemma]
\label{lem:batched-freivalds-integrity}
Let \(A \in \mathbb{F}^{n\times n}\) and let \(x,y \in \mathbb{F}^n\) with \(y \neq Ax\).
Draw \(R \in \mathbb{F}^{n\times k}\) whose columns \(r_1,\ldots,r_k\) are independent
and uniformly random from \(\mathbb{F}^n\). Then
\[
\Pr [R^{T} y \;=\; R^{T} A x] = \frac{1}{|\mathbb{F}|^{k}}.
\]
\end{lem}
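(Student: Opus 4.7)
My plan is to reduce the batched claim to $k$ independent applications of the single-vector Freivalds' Lemma (Lemma~\ref{lem:freivalds-integrity}), using the independence of the columns of $R$ to factorize the resulting probability. First, I would introduce the error vector $z := y - Ax$, which is nonzero in $\mathbb{F}^n$ by hypothesis, and rewrite the event of interest: $R^T y = R^T (Ax)$ is equivalent to $R^T z = 0$ in $\mathbb{F}^k$. Writing $R = [\,r_1 \mid r_2 \mid \cdots \mid r_k\,]$, the $j$-th coordinate of the vector $R^T z$ is exactly $r_j^T z$, so $R^T z = 0$ if and only if $r_j^T z = 0$ for every $j \in \{1, \ldots, k\}$.

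Next, I would invoke the single-vector lemma once per column. Applying Lemma~\ref{lem:freivalds-integrity} with the same $A$, $x$, $y$ (so that the ``bad'' scalar event for column $j$ is exactly $r_j^T z = 0$) yields $\Pr[r_j^T z = 0] = 1/|\mathbb{F}|$ for each $j$. Because the columns $r_1, \ldots, r_k$ are drawn independently and uniformly from $\mathbb{F}^n$ by hypothesis, the $k$ events $\{r_j^T z = 0\}_{j=1}^{k}$ are mutually independent, so their joint probability factorizes as the product of the marginals, producing $\bigl(1/|\mathbb{F}|\bigr)^{k} = 1/|\mathbb{F}|^{k}$, which is the claimed bound.

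The proof is essentially routine and I do not anticipate any genuine obstacle; the two items worth flagging explicitly are (i) the equivalence between the single vectorized condition $R^T z = 0$ and the conjunction of $k$ scalar conditions $r_j^T z = 0$, which is immediate from the column-wise action of $R^T$, and (ii) that the mutual independence of these $k$ scalar events is inherited directly from the stated independence of the columns of $R$, and does not require any additional probabilistic argument beyond that hypothesis.
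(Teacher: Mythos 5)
Your proposal is correct and follows essentially the same route as the paper's own proof: both reduce the batched event to the conjunction of the $k$ per-column events, apply the single-vector Lemma~\ref{lem:freivalds-integrity} to each, and use the independence of the columns of $R$ to factorize the joint probability into $1/|\mathbb{F}|^{k}$. The only cosmetic difference is that you phrase the per-column event as $r_j^T z = 0$ with $z := y - Ax$, whereas the paper keeps it as $r_i^T y = r_i^T Ax$; these are identical.
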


\begin{proof}
Let \(d := y - Ax \neq 0\). Let \(R = [\,r_1\,|\,\cdots\,|\,r_k\,]\in\mathbb{F}^{n\times k}\)
have independent columns $r_i$ sampled uniformly and independently from $\mathbb{F}^n$.
Define events $E_i := \{ r_i^{T} y = r_i^{T} A x \}$.

By Lemma~2 for each $i$,
\[
\Pr[E_i] = \Pr[r_i^{T} y = r_i^{T} (A x)]
= \frac{1}{|\mathbb{F}|}.
\]
Since \(r_1,\dots,r_k\) are independent, the events \(E_1,\dots,E_k\) are independent. Hence
\[
\Pr[R^{T}y = R^{T}Ax]
=\Pr[\bigwedge_{i=1}^{k} E_i]
=\prod_{i=1}^{k} \Pr[E_i]
=\frac{1}{|\mathbb{F}|^k}.
\]
\end{proof}

\begin{theorem}
\label{thm:pf-pPi-single-integrity}
    The hybrid inference protocol $\Pi_{\mathsf{malicious}}$ on decomposition $\Theta = (\Theta_{\charlie}, \Theta_{\david})$, where $\Theta$ is an $L$-layer MLP, is correct, secure and satisfies t-Soundness with $t = \frac{L}{|\mathbb{F}|^k}$ for every input $x$, in other words,
\end{theorem}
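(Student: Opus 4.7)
The plan is to dispatch the three requirements—correctness, security, and $t$-soundness with $t = L/|\mathbb{F}|^k$—by leveraging Theorem~\ref{thm:pf-pPi-single} for the first two and the Batched Freivalds' Lemma (Lemma~\ref{lem:batched-freivalds-integrity}) for the third.

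Correctness follows almost for free from the honest-but-curious case. In an honest execution David sets $\widetilde{a_i^{\david}} = W_i^{\david}\widetilde{a}_{i-1}$, which gives $Z_{i-1}\widetilde{a_i^{\david}} = Z_{i-1} W_i^{\david}\widetilde{a}_{i-1} = V_{i-1}\widetilde{a}_{i-1}$ immediately from the precomputed relation $V_{i-1} = Z_{i-1}W_i^{\david}$, so the integrity check never triggers an abort; the remainder of the correctness chain of equalities is exactly as in Theorem~\ref{thm:pf-pPi-single}. For security (against an honest-but-curious David), the only objects introduced on top of $\Pi_{\mathsf{honest}}$ are $Z_{i-1}$, $V_{i-1}$, and the check itself; by inspection of Figures~\ref{fig:secure-inference-offline}--\ref{fig:secure-inference-online} all of these reside entirely in Charlie's local state and are never transmitted. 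Hence David's real-world view is distributionally identical to his view in $\Pi_{\mathsf{honest}}$, and the simulator from the proof of Theorem~\ref{thm:pf-pPi-single} carries over verbatim.

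The substantive step is the $t$-soundness bound. Fix a PPT malicious David and an input $x$. For each layer $i\in[L]$, let $C_i$ be the event that David's message at layer $i$ satisfies $\widetilde{a_i^{\david}}\neq W_i^{\david}\widetilde{a}_{i-1}$ (``cheating at layer $i$''), and let $P_i$ be the event that Charlie's linear-layer check at layer $i$ passes. Honest behaviour at every layer (i.e.\ $\bigcap_i\neg C_i$) propagates correct intermediate activations, so $\Pi(x)\neq\Theta(x)$ together with $\Pi(x)\neq\bot$ forces some $C_i$ to occur while every $P_j$ occurs. A union bound then gives
\[
\Pr\bigl[\Pi(x)\neq\Theta(x),\ \Pi(x)\neq\bot\bigr]
\;\leq\; \sum_{i=1}^{L}\Pr[C_i \wedge P_i]
\;\leq\; \sum_{i=1}^{L}\Pr[P_i \mid C_i].
\]
To bound each conditional probability by $|\mathbb{F}|^{-k}$, I condition on David's entire view up to the instant he sends $\widetilde{a_i^{\david}}$; since $Z_{i-1}$ is freshly sampled in Charlie's pre-computation, never transmitted, and appears only in the unreleased $V_{i-1}$, it remains uniform in $\mathbb{F}^{k\times d_{i+1}}$ and independent of that view. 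Conditioned on $C_i$—i.e.\ on a fixed pair $(\widetilde{a}_{i-1},\widetilde{a_i^{\david}})$ with $\widetilde{a_i^{\david}}\neq W_i^{\david}\widetilde{a}_{i-1}$—Lemma~\ref{lem:batched-freivalds-integrity} (applied with the rows of $Z_{i-1}$ in the role of the $k$ independent Freivalds vectors, and $A=W_i^{\david}$, $y=\widetilde{a_i^{\david}}$, $x=\widetilde{a}_{i-1}$) yields $\Pr[P_i\mid C_i] = |\mathbb{F}|^{-k}$. Summing over the $L$ layers gives the claimed $t = L/|\mathbb{F}|^k$.

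The main obstacle is exactly this independence argument: I must ensure that $Z_{i-1}$ is information-theoretically hidden from David at the moment he chooses $\widetilde{a_i^{\david}}$, despite David's freedom to cheat adaptively across layers using everything he has seen so far. This holds precisely because $Z_{i-1}$ and $V_{i-1}$ are never transmitted and the check is performed locally by Charlie. A secondary subtlety is that adaptive multi-layer cheating could in principle compound, but the union bound absorbs this cleanly since each $\Pr[P_i\mid C_i]$ is bounded uniformly over the past, keeping the total at $L/|\mathbb{F}|^k$ rather than something worse.
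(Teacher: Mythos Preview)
Your proof is correct and follows essentially the same approach as the paper: correctness and security are inherited from Theorem~\ref{thm:pf-pPi-single} since the added verification data stays in Charlie's local state, and $t$-soundness comes from applying the Batched Freivalds' Lemma per layer followed by a union bound over the $L$ layers. Your treatment is in fact more careful than the paper's in one respect—you explicitly argue that $Z_{i-1}$ remains uniform and independent of David's adaptive view at the moment he commits to $\widetilde{a_i^{\david}}$, which the paper leaves implicit.
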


\begin{proof}
    Let us first start with \emph{correctness} and \emph{security} if both parties participate in the protocol honestly. The protocol $\Pi_{\mathsf{malicious}}$ computations which lead to its output are identical to that of $\Pi_{\mathsf{honest}}$, that is, it does equivalent computations beyond the integrity check as illustrated in figures \ref{fig:secure-inference-offline-honest-but-curious} and \ref{fig:secure-inference-online-honest-but-curious}. Hence the proof of correctness and security in the proof of Theorem \ref{thm:pf-pPi-single} suffices. 

    To prove security against a malicious David, note that if David provides an incorrect partial intermediate activation $\widetilde{a_i^{\david}}$, the protocol's output will most likely be incorrect. We aim to demonstrate that such a malicious David gains no more advantage than if he were to follow the protocol and have black-box access to the model. Instead of supplying the correct input to Charlie, a malicious adversary might provide alternative values with the intent to extract Charlie's parameters. However, even if David behaves maliciously, Charlie still provides him with outputs that are indistinguishable from uniform random noise.  

    We now move on to \emph{t-Integrity} against a malicious David. At a layer $i$, a malicious David can adaptively send incorrect $\widetilde{a_{i}^{\david}}$. Charlie performs a linear layer check $Z_{i-1} \cdot \widetilde{a_i^{\david}} \stackrel{?}{=} V_{i-1} \cdot \widetilde{a}_{i-1}$  where $Z_{i-1}$ is a matrix of k vectors with each entry sampled independently and uniformly at random from $\mathbb{F}$ and $V_{i-1} := Z_{i-1} \cdot W_i^{\david}$ is a precomputed verification vector. If $\widetilde{a_{i}^{\david}}$ is incorrect, then $\widetilde{a_{i}^{\david}} \neq W_i^{\david}\widetilde{a}_{i-1}$, so by the batched Freivalds' lemma \ref{lem:batched-freivalds-integrity} we have that $$\text{Pr}[Z_{i-1}\widetilde{a_{i}^{\david}} = Z_{i-1}(W_i^{\david}\widetilde{a}_{i-1})] = 1/|\mathbb{F}|^k.$$ 

    Hence the probability that Charlie accepts an incorrect output at layer $i$ is at most $1/|\mathbb{F}|^k$. 

    \jtc{We discussed this should be F to the k, right?, without the L, since there is no union bound, just if david attacked somewhere, then consider the first layer, and we catch that layer with high probabiliy, and that's it.}
    By a simple union bound we have that,
    $$\text{Pr}[\Pi_{\mathsf{malicious}}(x) \neq \Theta(x) | \Pi_{\mathsf{malicious}}(x)  \neq \bot] \leq L/|\mathbb{F}|^k$$
    
    Hence the protocol achieves $t$-Soundness across the entire $L$-layer inference with overall failure probability at most
    $L / |\mathbb{F}|^k$.

\end{proof}

\section{Discussion}
\label{sec:Discussion}
\jtc{wrote}
This work advances the state of secure machine learning by providing the first hybrid inference protocol for LLMs with formal, information-theoretic security guarantees. SLIP~\cite{refael2024slipsecuringllmsip} demonstrates that it is possible to offload the majority of inference computation to untrusted devices without exposing model IP beyond what is accessible via black-box queries. This is achieved through a principled combination of model decomposition and cryptographically inspired protocol design which we rigorize and provide a security analysis.

Several important considerations arise for the security community. First, while our protocols are provably secure under the defined adversarial models, real-world deployments must address implementation-level threats such as side-channel attacks, weak randomness, and network-level vulnerabilities. Next, our approach complements, but does not replace, other defenses such as trusted and confidential hardware and compute. This solution, in tandem with others remains a promising avenue for secure and private solutions.

We hope that SLIP’s formalization of hybrid inference security will catalyze further research at the intersection of cryptography, systems, and machine learning, and encourage the adoption of provable security standards in practical AI deployments.

\medskip

%%%%%%%%%% Bibliography %%%%%%%%%%%%
%%%%%%%%%%%%%%%%%%%%%%%%%%%%%%%%%%%%
\bibliographystyle{abbrv}
% \nocite{*}
\bibliography{main.bib}

%%%%%%%%%% Appendix %%%%%%%%%%%%
%%%%%%%%%%%%%%%%%%%%%%%%%%%%%%%%
% \newpage
% \input{Arxiv/appendix}

\end{document}